\newtheorem{theorem}{Theorem}
\newcommand{\til}       {\mbox{$\tilde{\theta}_{i}$}}
\newcommand{\tilb}       {\mbox{$\tilde{\theta}_{i}^B$}}
\newcommand{\thib}       {\mbox{$\hat{\theta}_{i}^B$}}
\newcommand{\thijb}       {\mbox{$\hat{\theta}_{ij}^B$}}
\newcommand{\V}       {\text{Var}}
\newcommand{\db}       {\mbox{$\hat{\delta}_i^B$}}
\newcommand{\dr}       {\mbox{$\hat{\delta}_i^R$}}
\newcommand{\di}       {\mbox{$\hat{\delta}_i^I$}}
\newcommand{\dn}       {\mbox{$\hat{\delta}_i^n$}}
\newcommand{\dc}       {\mbox{$\hat{\delta}_i^C$}}
\newcommand{\de}       {\mbox{$\delta_i$}}
\newcommand{\tij}       {\mbox{$\theta_{ij}$}}
\newcommand{\htijb}       {\mbox{$\hat{\theta}_{ij}^B$}}
\newcommand{\thbariw}     {\mbox{$\bar{\hat{\theta}}_{iw}^{B}$}}
\newcommand{\tbw}       {\mbox{$\bar{\theta}_{w}$}}
\newcommand{\tbiw}       {\mbox{$\bar{\theta}_{iw}$}}
\newcommand{\tiw}         {\mbox{$\bar{\hat{\theta}}_{iw}$}}
\newcommand{\thw}       {\mbox{$\bar{\hat{\theta}}_{w}^B$}}
\newcommand{\thww}       {\mbox{$\bar{\hat{\theta}}_{w}$}}
\newcommand{\thbiw}       {\mbox{$\bar{\hat{\theta}}_{iw}^B$}}
\newcommand{\that}       {\mbox{$\hat{\theta}$}}
\newcommand{\thij}       {\mbox{$\hat{\theta}_{ij}$}}
\newcommand{\thhi}       {\mbox{$\hat{\theta}_{i}$}}
\title{Two-stage Benchmarking as Applied\\ to Small Area Estimation}
\author{Malay Ghosh and Rebecca C. Steorts \thanks{Alphabetical order.}
 \\ University of Florida and Carnegie Mellon University}
\date{}
\begin{document}
\maketitle
\begin{center}
\vspace*{-1em}
\textbf{Abstract}
\end{center}
\noindent 
There has been recent growth in small area estimation due to the need for more precise estimation of small geographic areas, which has led to groups such as the U.S. Census Bureau, Google, and the RAND corporation utilizing small area estimation procedures. We develop novel two-stage benchmarking methodology using a single weighted squared error loss function that combines the loss at the unit level and the area level without any specific distributional assumptions. This loss is considered while benchmarking the weighted means at each level or both the weighted means and weighted variability at the unit level. Furthermore, we provide multivariate extensions for benchmarking weighted means at both levels. 
The behavior of our methods is analyzed using a complex study from the National Health Interview Survey (NHIS) from 2000, which estimates the proportion of people that do not have health insurance for many domains of an Asian subpopulation. Finally, the methodology is explored via simulated data under the proposed model. Ultimately,  three proposed benchmarked Bayes estimators do not dominate each other,
leaving much exploration for further understanding of such complex studies such as the choice of weights, optimal algorithms for efficiency, as well as extensions to multi-stage benchmarking methods. \\

keywords: Small area estimation, benchmarking, decision theory, Bayesian methodology, MCMC, complex survey, and weighting.
% \PACS{PACS code1 \and PACS code2 \and more}
% \subclass{MSC code1 \and MSC code2 \and more}
\newpage
\section{Introduction}
\label{intro}
There has been recent growth in small area estimation due to the need for more precise estimation of small geographic areas, which has led to groups such as the U.S. Census Bureau, Google, and the RAND corporation utilizing small area estimation procedures. Direct estimates are often design-consistent but usually have large standard errors and coefficients of variation, especially for areas with small sample sizes. In order to produce estimates for these small areas, it is necessary to borrow strength from similar areas. Accordingly, model-based estimates often differ widely from the direct estimates, especially for areas with small sample sizes. One potential difficulty with model-based estimates is that when aggregated the overall estimate for a larger geographical area may differ considerably from the corresponding direct estimate. Moreover, an overall agreement with the direct estimates at an aggregate level is very often a political necessity to convince legislators of the utility of small area estimators. The process of adjusting model-based estimates to correct this problem is known as benchmarking. One key benefit of benchmarking is protection against model misspecification.\\

Many advancements have been made in the benchmarking literature. For instance,  
\cite{wang_2008} derived a best linear unbiased predictor (BLUP) subject to the standard benchmarking constraint while minimizing a weighted squared error loss. They restricted their attention to a simple random effects model and only considered linear estimators of small area means, all in a univariate setting. 
Recently, \cite{ugarte_2009} considered the nested mixed linear model of 
\cite{battese_1988}
and derived estimators subject to the standard benchmarking constraint. 
%They were also concerned with finding bootstrap estimators of the MSE of their estimators under the standard benchmarking constraint.
\cite{datta_2011} provided an extension to the work of \cite{wang_2008}, and their work was extended by \cite{bell_2013} for benchmarking a weighted mean, however nonlinear estimators were not considered.  
\cite{datta_2011}~developed a general class of benchmarked Bayes estimators, including but not restricted to linear estimators, 
subject to a more general benchmarking constraint without any specific distributional assumptions. They derived a second set of estimators under a second constraint, namely benchmarking a weighted variability. These results are extended to multivariate settings. Both papers showed that many previously proposed estimators in the literature are special cases of their general Bayes estimators.\\

%%rewrite
Currently, federal agencies such as the U.S. Census Bureau employ a two-stage benchmarking procedure where they constrain the estimates in two separate steps. For example, the U.S. Census Bureau adjusts the state-level estimates based on a certain model to the national estimate. Then they adjust the county-level estimates to the benchmarked state-level estimates using a separate model.
Here, we   
extend the methods in \cite{datta_2011} to cover two-stage benchmarking using a single model. We do not require any specific distributional assumptions, and we consider a single loss function that incorporates information at the area level and the unit level while benchmarking either a weighted mean or both a weighted mean and weighted variability. Finally, we are able to derive estimators that again can be expressed as the ordinary Bayes estimator plus a correction factor.\\

This new approach is important since the process of combining  benchmarking at both levels into a single process is optimal in many situations, as it avoids the need to carry out two separate benchmarking steps. Hence, we can find estimates quickly and efficiently. Equally as important, our model only requires  covariate information at the unit level, whereas current benchmarking techniques require covariate information at both levels, which is not always available.
Finally, we illustrate our methods using a complex dataset taken from the  National Health Interview Survey (NHIS). We also illustrate our proposed methods using a simulation study, illustrating with real data and simulation that
none of three given benchmarked estimators dominates any of the others
in terms of posterior mean squared error, giving rise to many new research questions.

\section{Two-Stage Benchmarking}
\label{sec:two_stage}
As already described, we propose new methods where we first benchmark a weighted mean at the unit level and then the area level without imposing distributional assumptions. In Section 2.3, we benchmark the weighted means as well as the weighted variability at the unit level, and in Section 2.4, provided multivariate extensions for benchmarking weighted means at both levels. 
This work is an extension of \cite{datta_2011}. Applications include sociology, public policy, health care, economics, and general types of social statistical applications. 

\subsection{Notation}
Let $\tij$ denote the true parameter of interest for the $j$th unit (or sub-area) in the $i$th area, and let $\htijb$ denote its Bayes estimator under a certain prior $(j=1,\ldots,n_i;\;i=1,\ldots,m).$ In our notation, we have $m$ small areas, and within each $i$th area, we have $n_i$~observations.
Let $w_{ij}$ denote a set of known weights, normalized so that $\sum_{j=1}^{n_i}w_{ij}=1$ for all $i,$ which could, for example, be proportional to the population size of the $j$th sub-area.
In addition, let $\tbiw = \sum_{j} w_{ij}\tij $ denote the true weighted mean for the $i$th area. For example, $\tij$ may be the true proportion of poor school-aged children in the $j$th county in the $i$th state, and $w_{ij}$ proportional to the number of   school-aged children for the $j$th county in the $i$th state.
Also, let $\eta_i$ $(i=1,\ldots,m)$ denote the weights for the $i$th area, normalized so that $\sum_i \eta_i = 1.$ 

\subsection{Benchmarking: Weighted Mean at the Unit Level and Area Level}
We first develop a method for two-stage benchmarking a weighted mean at the unit level and the area level.
We want to find estimates $\thij$ for  $\tij$ 
and $\delta_i$ for  $\tbiw$ such that $\sum_i\eta_i \delta_i = p$ and $\sum_jw_{ij}\thij 
= \delta_i$ for all $i.$ For example, $p$ might be the estimate of the national proportion of poor 
school-aged children. Throughout we will use the notations 
$\theta=(\theta_{11},\ldots,\theta_{1n_1},\ldots,\theta_{m1},\ldots,\theta_{mn_m})^T$, $\hat{\mathbf{\theta}}= 
(\hat{\theta}_{11},\ldots,\hat{\theta}_{1n_1},\ldots,
\hat{\theta}_{m1},\ldots,\hat{\theta}_{mn_m})^T$ and
${\mathbf{\delta}} = (\delta_1,\ldots,\delta_m)^T.$
Also, define 
$\thww = \sum_{ij} \eta_i w_{ij} \thij.$
Our objective is to minimize
 $E[L(\theta,\hat{\theta},\delta)\mid \text{data}]$ with respect to $\thij$ and $\de,$ where 
\begin{align}
L(\mathbf\theta,\hat{\mathbf{\theta}},\delta) &= 
\sum_i\sum_j \xi_{ij}(\thij-\tij)^2 + \sum_i\phi_i(\de-\tbiw)^2, \label{loss}
\end{align}
subject to the benchmarking constraints (i) $\sum_i\eta_i \de = p$ and (ii) 
$\sum_jw_{ij}\thij = \de$ for all~$i$. Note that the weights $\xi_{ij}$ and 
$\phi_i$ need not be the same as $w_{ij}$ and $\eta_i$ respectively. Now 
\begin{align}
E[L(\theta,\hat{\theta},\delta)|\text{data}] &=
E\left[\sum_i\sum_j \xi_{ij}(\tij - \htijb + \htijb - \thij)^2 + \right.\notag\\
&\qquad\left.\sum_i\phi_i(\tbiw - \thbiw + \thbiw - \de )^2|\text{data}\right] \notag\\
&=\sum_i \sum_j\xi_{ij}[(\thij-\htijb)^2 +\V(\tij|\text{data})]
\label{lossone}\\
&\qquad+\sum_i\phi_i[(\de - \thbiw )^2+\V(\tbiw|\text{data})].\notag
\end{align}

In view of (\ref{lossone}), minimization of $E[L(\theta, \that, \delta)|\text{data}]$ with respect to $\that$ subject to constraints (i) and (ii)
reduces to minimizing
\begin{align}
\label{newloss}
\sum_i \sum_j \xi_{ij}(\thij-\htijb)^2 
+\sum_i\phi_i\left[\sum_jw_{ij} (\thij- \htijb) \right]^2
\end{align}
subject to the single constraint 
\begin{align}
\label{singlec}
 \sum_i \eta_i \sum_j w_{ij} \thij = p.
\end{align}

In order to express (\ref{newloss}) and (\ref{singlec}) more compactly, we define
$\thhi = (\hat{\theta}_{i1},\ldots, \hat{\theta}_{in_{i}})^T,$\; 
$ \hat{\theta}_i^B = (\hat{\theta}_{i1}^B,\ldots, \hat{\theta}_{in_{i}}^B)^T,$\; 
$\tilde{\theta}_i = \thhi - \thib,$\; $A_i = \text{Diag} \{ 
\xi_{i1},\ldots, \xi_{in_i}
\},$ and
$w_i = (w_{i1},\ldots, w_{in_i})^T.$
Also, define $\Psi_i = A_i + \phi_i w_i w_i^T,$ and $s_i = \sum_j w_{ij}^2 \xi_{ij}^{-1} = w_i^TA_i^{-1} w_i.$ We also write $\thw = \sum_{ij} \eta_i w_i^T\thib.$
Then the constrained minimization problem can be rephrased as minimizing   
%$\sum_i \tilde{\theta}_i^T \Psi_i  \tilde{\theta}_i$ subject to constraint
\begin{align}
\label{final1}
\sum_i \tilde{\theta}_i^T \Psi_i  \tilde{\theta}_i \quad 
 \text{ subject to the constraint } \quad
p^* := p -\thw = \sum_i \eta_i w_i^T  \tilde{\theta}_i.
\end{align}
The solution to the above problem is given in the following theorem:

\begin{theorem}
The minimizing solution of (\ref{final1}) is given by 
\begin{enumerate}
\item[(a)] $\hat{\theta}_i^{BM1} = 
\hat{\theta}_i^{B}  + \dfrac{(p-\thw)\eta_i(1+\phi_is_i)^{-1}A_i^{-1}w_i}{\sum_{k=1}^m\eta_k^2s_k(1+\phi_ks_k)^{-1}}$
for all $i,$ and
\item[(b)] $\hat{\delta}_{i}^{BM1} = 
w_i^T\hat{\theta}_i^{BM1} 
=
\bar{\hat{\theta}}_i^{B} + \dfrac{(p-\thw)\eta_is_i(1+\phi_is_i)^{-1}}{\sum_{k=1}^m\eta_k^2s_k(1+\phi_ks_k)^{-1}}$ for all $i.$

\end{enumerate}
\end{theorem}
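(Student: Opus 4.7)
The plan is to attack \eqref{final1} by Lagrange multipliers. Introducing a single multiplier $\lambda$ for the scalar constraint, I would form
\begin{align*}
\mathcal{L}(\tilde\theta_1,\ldots,\tilde\theta_m,\lambda)
= \sum_i \tilde{\theta}_i^T \Psi_i \tilde{\theta}_i - 2\lambda\Bigl(\sum_i \eta_i w_i^T \tilde{\theta}_i - p^*\Bigr),
\end{align*}
differentiate block by block in $\tilde\theta_i$, and obtain the stationarity condition $\Psi_i \tilde{\theta}_i = \lambda\eta_i w_i$, i.e.\ $\tilde{\theta}_i = \lambda\eta_i \Psi_i^{-1} w_i$. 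Because each $\Psi_i = A_i + \phi_i w_i w_i^T$ is positive definite (the $\xi_{ij}$ and $\phi_i$ are positive weights) and the constraint is linear, this stationary point is the unique global minimizer.

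The computational heart of the proof is inverting $\Psi_i$. Here I would invoke the Sherman--Morrison formula to write
\begin{align*}
\Psi_i^{-1} = A_i^{-1} - \frac{\phi_i A_i^{-1} w_i w_i^T A_i^{-1}}{1+\phi_i s_i},
\end{align*}
using $w_i^T A_i^{-1} w_i = s_i$. Multiplying on the right by $w_i$ collapses nicely:
\begin{align*}
\Psi_i^{-1} w_i = A_i^{-1}w_i \Bigl(1 - \frac{\phi_i s_i}{1+\phi_i s_i}\Bigr) = \frac{A_i^{-1} w_i}{1+\phi_i s_i}.
\end{align*}
This is the step I expect to be the main (really, only) obstacle, since the rest of the argument is algebraic bookkeeping; once this identity is in hand, the theorem essentially writes itself.

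Substituting back, $\tilde{\theta}_i = \lambda\eta_i (1+\phi_i s_i)^{-1} A_i^{-1} w_i$. Plugging this into the constraint $\sum_i \eta_i w_i^T \tilde\theta_i = p^*$ yields
\begin{align*}
p^* = \lambda \sum_i \frac{\eta_i^2 s_i}{1+\phi_i s_i},
\qquad\text{hence}\qquad
\lambda = \frac{p - \thw}{\sum_k \eta_k^2 s_k (1+\phi_k s_k)^{-1}},
\end{align*}
where I used $p^* = p - \thw$ from \eqref{final1}. Adding $\hat\theta_i^B$ back to $\tilde\theta_i = \hat\theta_i^{BM1} - \hat\theta_i^B$ gives part (a).

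For part (b), I would simply left-multiply the expression in (a) by $w_i^T$, using $w_i^T \hat\theta_i^B = \bar{\hat\theta}_i^B$ and $w_i^T A_i^{-1} w_i = s_i$, which immediately produces the stated formula for $\hat\delta_i^{BM1}$. As a consistency check one can verify that $\sum_i \eta_i \hat\delta_i^{BM1} = p$, confirming that the outer benchmarking constraint holds, and that the inner constraint $\sum_j w_{ij} \hat\theta_{ij}^{BM1} = \hat\delta_i^{BM1}$ holds by construction since it is exactly the identity being used in (b).
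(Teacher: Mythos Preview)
Your argument is correct. It differs from the paper's in presentation: the paper \emph{verifies} the claimed minimizer rather than deriving it. Specifically, the paper writes $\sum_i \tilde\theta_i^T \Psi_i \tilde\theta_i$ as the sum of $\sum_i(\tilde\theta_i-\tilde\theta_i^{BM1})^T\Psi_i(\tilde\theta_i-\tilde\theta_i^{BM1})$, a constant term, and a cross term, and then shows the cross term vanishes on the constraint set via the identity $\Psi_i A_i^{-1} w_i = (1+\phi_i s_i) w_i$. Your Sherman--Morrison computation $\Psi_i^{-1} w_i = (1+\phi_i s_i)^{-1} A_i^{-1} w_i$ is precisely the inverse of that identity, so the algebraic content is the same. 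Your Lagrange-multiplier route has the advantage of being constructive (no guess required), while the paper's decomposition avoids explicitly inverting $\Psi_i$ and makes the uniqueness immediate from the positive-definite first term; either approach is perfectly adequate here.
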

\begin{proof}
We will show that $\tilde{\theta}_i^{BM1} = \hat{\theta}_i^{BM1} - \hat{\theta}_i^{B}$
is the unique solution of (\ref{final1}). The proof of (b) immediately follows. 
To prove (a), we rewrite
\begin{align}
\label{three}
\sum_i \til^T \Psi_i \til
&=\sum_i (\til - \tilde{\theta}_i^{BM1} )^T \Psi_i (\til - \tilde{\theta}_i^{BM1} ) \notag \\
&\quad + \sum_i (\tilde{\theta}_i^{BM1})^T \Psi_i  \tilde{\theta}_i^{BM1}
+ 2\sum_i (\tilde{\theta}_i^{BM1})^T \Psi_i (\til - \tilde{\theta}_i^{BM1} ).
\end{align}
Showing the cross term in (\ref{three}) is zero will prove that
 $\tilde{\theta}_i^{BM1}$ is the unique solution of~(\ref{final1}). 
Let $q = \sum_{k=1}^m\eta_k^2s_k(1+\phi_ks_k)^{-1}$ and note that $\Psi_i A_i^{-1} w_i =  (1+ \phi_i s_i)w_i$ for all $i.$ 
We can now express
%$\hat{\theta}_i^{BM1} = 
%\hat{\theta}_i^{B}  + \dfrac{p^*\;\eta_i(1+\phi_is_i)^{-1}A_i^{-1}w_i}{q}.$
$\hat{\theta}_i^{BM1} = 
\hat{\theta}_i^{B}  + p^*q^{-1} \;\eta_i(1+\phi_is_i)^{-1}A_i^{-1}w_i.$
 Using the new expression of $\hat{\theta}_i^{BM1}$ and the new notation, we find
$$\Psi_i \tilde{\theta}_i^{BM1}
=\Psi_i (\hat{\theta}_i^{BM1} - \thib)
= \frac{p^*}{q} \eta_i (1+\phi_is_i)^{-1}
\Psi_i A_i^{-1} w_i
= \frac{p^*}{q} \eta_i 
w_i.
$$
This implies that 
$\sum_i (\tilde{\theta_i}^{BM1})^T \Psi_i(\tilb - \tilde{\theta}_i^{BM1})  = \dfrac{p^*}{q}(p^* - p^*) = 0,$ which follows from the identity given above and the constraint in (\ref{final1}). Then the cross product term in~(\ref{three}) is zero, which proves the result.
\end{proof}

\textbf{Remark 1.} The result simplifies somewhat when $\xi_{ij} = w_{ij}$ 
for all $i,j.$ Then $s_i = 1$ for all $i$ so that
\begin{align*}
\hat{\theta}_{ij}^{BM1} &= \htijb + \dfrac{(p-\thw)\eta_i(1+\phi_i)^{-1}}{\sum_{k=1}^m\eta_k^2(1+\phi_k)^{-1}},\\
\hat{\delta}_i^{BM1} &= \thbiw + \dfrac{(p-\thw)\eta_i(1+\phi_i)^{-1}}{\sum_{k=1}^m\eta_k^2(1+\phi_k)^{-1}}.
\end{align*}
It can be seen that all individuals $j$ in area $i$ receive the same adjustment. Further simplification is possible when $\phi_i = \eta_i \;(i=1,\ldots,m).$ \\

\textbf{Remark 2.} Suppose $\hat{\theta}_{ij}^B > 0$ for all $i$ and $j.$ 
%
%Assume that $c_i > 0$ for all $i$ and $g > 0$ are constants. 
%
If we choose $\xi_{ij} = w_{ij} (\hat{\theta}_{ij}^B)^{-1}$ and $\phi_i = (g\eta_i-1)(\bar{\hat{\theta}}_{iw}^B)^{-1},$ where $g > \max_{1 \leq i \leq m}{\eta_i}^{-1},$
then our benchmarked Bayes estimator in Theorem 1 reduces to the raked estimator at both levels. Recall that the raked estimators take the form 
\begin{equation*}
\hat{\theta}_i^R =
p \, \thbiw  (\bar{\hat{\theta}}_w^B)^{-1}\; \text{  and  }\;
\hat{ \theta}_{ij}^R = p \,\hat{\theta}_{ij}^B (\bar{\hat{\theta}}_w^B)^{-1},
\end{equation*}
where $\hat{\theta}_i^R$ represents the raked estimator at the area level and 
$\hat{\theta}_{ij}^R$ represents the raked estimator at the unit level.
Using the values for $\xi_{ij}$ and $\phi_i$ as above, we find that 
$s_i = \thbiw.$ Then
\[
\hat{\delta}_i^{BM1}=\thbiw + \frac{(p-\thw)\eta_i\thbiw\{g\eta_i\}^{-1}}{\sum_k \eta_k^2  \bar{\hat{\theta}}_{kw}^B\{g\eta_k \}^{-1}}
=p\thbiw (\thw)^{-1}=\hat{\theta}^R_i.
\]
Also,  
\[
\thij = \htijb + \frac{(p-\thw)\eta_i\{g\eta_i\}^{-1}w_{ij}w_{ij}^{-1}\htijb }
{\sum_k \eta_k^2  \bar{\hat{\theta}}_{kw}^B\{g\eta_k \}^{-1}} 
=p \; \htijb (\thw)^{-1} = \hat{\theta}^R_{ij}.
\]
Hence, we have shown that the raked estimators are  special cases of Theorem 1. \\

%%new remark by ghosh and steorts
\textbf{Remark 3}: The posterior mean squared error (PMSE) of 
$\hat{\theta}_{ij}^{BM1}$ is given by 
$$E[(\thij - \hat{\theta}_{ij}^{BM1})^2 \mid \text{data}]
= V(\tij \mid \text{data}) +
(\thijb - \hat{\theta}_{ij}^{BM1})^2.
$$
Similarly, $\text{PMSE}(\hat{\delta}_i^{BM1})
= 
V(\de \mid \text{data}) +
(\hat{\delta}_i^B - \hat{\delta}_{i}^{BM1})^2,
$
where 
$\hat{\delta}_i^B$ is the Bayes estimator of $\de.$
It is now clear that no benchmarked estimator can outperform any other for all $i$ and $j$.  Essentially, the benchmarking constraint forces a certain ``overall'' adjustment to be made, and the way in which a particular benchmarked estimator allocates this adjustment across the areas and/or units determines how it performs for each area and/or unit compared to other benchmarked estimators.  No sensible benchmarked estimator can make uniformly smaller adjustments than another (without violating the benchmarking constraints), hence no sensible benchmarked estimator can perform better than another for every area and/or unit.

\subsection{Benchmarking a Weighted Variability at the Unit Level}

Often, in addition to controlling the averages as in Theorem 1, we want to control the variability of the estimates as well. Specifically, we may wish to constrain 
the weighted variability of the estimates at the unit level
 (in the $i$th small area equal to $h_i$), which is
given externally from administrative data or taken to be $h_i = \sum_{ij} w_{ij} E[(\tij -\bar{\theta}_{iw})^2|\thij].$
The following theorem provides a partial answer to this problem when $\xi_{ij} = w_{ij}$ for all $i$ and $j.$

%\newpage
%%Theorem 2
\begin{theorem}
The minimizing solution of $E[\sum_i\sum_j w_{ij}(\thij-\tij)^2 + \sum_i\phi_i(\de-\tbiw)^2|\text{data}]$ 
subject to (i)~$\sum_j w_{ij} \thij = \de,$
(ii)~$\sum_j w_{ij}(\thij-\de)^2 = h_i,$
and (iii)~$\sum_i \eta_i \de = p$
is given by
\begin{enumerate}
\item[(a)] $\hat{\theta}_{ij}^{BM2} = \thbiw  + \dfrac{(p-\thw)\eta_i(1+\phi_i)^{-1}}{\sum_k\eta_k^2(1+\phi_k)^{-1}} + \left(h_id_i^{-1}\right)^{\frac{1}{2}}
(\htijb-\thbiw)$
\item [(b)]$\hat{\delta}_i^{BM2} = \thbiw + \dfrac{(p-\thw)\eta_i(1+\phi_i)^{-1}}
{\sum_k\eta_k^2(1+\phi_k)^{-1}}$,
where $d_i= \sum_j w_{ij}(\htijb - \thbiw)^2.$
\end{enumerate}
\end{theorem}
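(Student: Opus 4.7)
The plan is to mimic the decomposition used in Theorem 1. First I would apply the same expand-and-subtract manipulation that produced (\ref{lossone}), so that minimizing the posterior expected loss reduces to minimizing
\begin{align*}
\sum_i\sum_j w_{ij}(\hat{\theta}_{ij}-\hat{\theta}_{ij}^B)^2 + \sum_i\phi_i(\delta_i-\bar{\hat{\theta}}_{iw}^B)^2
\end{align*}
(up to data-dependent constants) subject to constraints (i), (ii), and (iii). The key idea is then to reparametrize: using constraint (i), write $\hat{\theta}_{ij}=\delta_i+u_{ij}$ with $\sum_j w_{ij}u_{ij}=0$, and observe that constraint (ii) becomes $\sum_j w_{ij}u_{ij}^2 = h_i$. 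The $\delta_i$'s and the $u_{ij}$'s are now coupled only through the objective, not through any of the three constraints.

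Next I would substitute this parametrization into the objective and exploit the identity $\sum_j w_{ij}(\delta_i-\hat{\theta}_{ij}^B)^2 = (\delta_i-\bar{\hat{\theta}}_{iw}^B)^2 + d_i$ together with $\sum_j w_{ij}u_{ij}(\delta_i-\hat{\theta}_{ij}^B) = -\sum_j w_{ij}u_{ij}\hat{\theta}_{ij}^B$. After dropping the constant $h_i$ and $d_i$ terms, the problem cleanly splits into two independent subproblems: (A) minimize $\sum_i(1+\phi_i)(\delta_i-\bar{\hat{\theta}}_{iw}^B)^2$ subject to $\sum_i \eta_i \delta_i = p$; and (B) for each $i$, maximize $\sum_j w_{ij}u_{ij}\hat{\theta}_{ij}^B$ subject to $\sum_j w_{ij}u_{ij}=0$ and $\sum_j w_{ij}u_{ij}^2=h_i$.

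Subproblem (A) is exactly the $\xi_{ij}=w_{ij}$ specialization of the problem solved in Theorem 1, so a single Lagrange multiplier delivers the stated expression for $\hat{\delta}_i^{BM2}$. Subproblem (B) is a quadratic maximization on the intersection of a hyperplane and an ellipsoid; the two Lagrange multipliers force $u_{ij}$ to be proportional to $\hat{\theta}_{ij}^B-\bar{\hat{\theta}}_{iw}^B$, and the scale is fixed by the variability constraint to yield $u_{ij}=(h_i/d_i)^{1/2}(\hat{\theta}_{ij}^B-\bar{\hat{\theta}}_{iw}^B)$, with the positive root chosen because only that sign minimizes the original quadratic. Adding $\hat{\delta}_i^{BM2}$ and $u_{ij}$ reproduces the expression for $\hat{\theta}_{ij}^{BM2}$ in part (a).

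The main obstacle, I expect, is bookkeeping: verifying that after the $\delta+u$ substitution the cross term $\sum_j w_{ij}u_{ij}\hat{\theta}_{ij}^B$ is the only place where $u$ and the Bayes estimator interact, so that the two subproblems truly decouple. Once that cleanly separates, the rest is two standard constrained optimizations. A small secondary point is justifying uniqueness in (B)---the objective is linear in $u$ restricted to a compact constraint set, giving a unique extremum on the ellipsoid---and choosing the sign of $(h_i/d_i)^{1/2}$ correctly so that $u_{ij}$ is aligned with (rather than opposite to) $\hat{\theta}_{ij}^B-\bar{\hat{\theta}}_{iw}^B$, which is what actually decreases the objective relative to the other root.
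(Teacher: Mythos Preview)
Your proposal is correct and follows essentially the same route as the paper. Both arguments center each $\hat{\theta}_{ij}$ around its weighted mean (your $\hat{\theta}_{ij}=\delta_i+u_{ij}$ is exactly the paper's split $\hat{\theta}_i=(\hat{\theta}_i-\bar{\hat{\theta}}_{iw}1_i)+\bar{\hat{\theta}}_{iw}1_i$, since constraint~(i) gives $\delta_i=\bar{\hat{\theta}}_{iw}$), obtain the same two decoupled subproblems, and solve them the same way; the only cosmetic difference is that the paper phrases subproblem~(B) as maximizing a covariance and invokes ``correlation equal to one'' where you invoke Lagrange multipliers/Cauchy--Schwarz, which are equivalent.
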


%%new proof 5.2.13 from ghosh. 
\begin{proof}
Similar to the proof of Theorem~1, the problem reduces to the minimization of 
\begin{align}
\label{newloss2}
\sum_i \sum_j \xi_{ij}(\thij-\htijb)^2 
+\sum_i\phi_i\left[\sum_jw_{ij} (\thij- \htijb) \right]^2 
\end{align}
subject to constraints (i)--(iii).
Since $\xi_{ij} = w_{ij}$ for all $i$ and $j,$ we define $W_i = A_i$ for all~$i.$
We use the same notation as in the proof of Theorem~1, namely, $w_i, \hat{\theta}_i,
\thib, \tiw, \thbiw,$ and  $\tbw.$  
%We introduce additional notation, defining $W_i = \text{Diag}\{W_{i1},\ldots, W_{in_i}\}$ and $1_i$ as an $n_i$ by $1$ column vector with each element equal to 1. 
We also introduce additional notation, defining $1_i$ as an $n_i$ by $1$ 
column vector with each element equal to 1. 
We can now re-express (\ref{newloss2}) as
\begin{align}
\label{ident2}
& \sum_i (\thhi - \thib)^T(W_i + \phi_i w_i w_i^T)(\thhi - \thib) \notag\\
&\quad = \sum_i \left[(\thhi - \tiw1_i) - (\thib - \thbiw 1_i)\right]^T
(W_i + \phi_i w_i w_i^T)
\left[(\thhi - \tiw1_i) - (\thib - \thbiw 1_i)\right] \notag\\
&\quad + \sum_i  (\tiw - \thbiw)^T 
   1_i^T(W_i + \phi_i w_i w_i^T)
   1_i
(\tiw - \thbiw)
\notag \\
&\quad = \sum_i \left[(\thhi - \tiw1_i) - (\thib - \thbiw 1_i)\right]^T
(W_i + \phi_i w_i w_i^T)
\left[(\thhi - \tiw1_i) - (\thib - \thbiw 1_i)\right] \notag\\
&\quad + \sum_i (1+  \phi_i) (\tiw - \thbiw)^2,
\end{align}
%since $w_i^T1_i = 1_i^T  w_i = 1$ and $ 1_i^T W_i1_i = 1.$
since $w_i^T1_i = 1$ and $ W_i1_i = w_i.$
Moreover, combining constraints (i) and (iii) into one single constraint yields
(i') $\sum_i \eta_i w_i^T \thhi~=~p.$ The problem reduces to minimization of~(\ref{ident2})
with respect to $\that$ subject to (i') and (ii).
Now define $$g_i 
=  \dfrac{(p - \thw )\eta_i (1+ \phi_i)^{-1}}
{\sum_s \eta_s^2 (1 + \phi_s)^{-1}}$$ for $i=1,\ldots,m.$
Then the second term in the right-hand side  of (\ref{ident2}) becomes
\begin{align}
\label{gi}
\sum_i (1+  \phi_i) (\tiw - \thbiw)^2
=
\sum_i (1+  \phi_i) (\tiw - \thbiw - g_i)^2
+ \frac{(p - \thw)^2}{\sum_s \eta_s^2 (1 + \phi_s)^{-1}}.
\end{align}

Hence, the left-hand side of (\ref{gi}) is minimized when 
$\tiw = \thbiw + g_i$ for all i.\\
We now note that constraint~(ii) may be written as 
\begin{align*}
h_i&=(\thhi - \tiw1_i)^TW_i(\thhi - \tiw1_i)
=\thhi^TW_i\thhi-\left(\tiw\right)^2
=\thhi^T\left(W_i-w_iw_i^T\right)\thhi.
\end{align*}
%rewrite $\textcolor{red}{h_i =  \hat{\theta}_i^T(W_i - w_iw_i^T)  \hat{\theta}_i}.$
Also, express
$d_i = (\thib- \thbiw 1_i)^T W_i (\thib- \thbiw 1_i).$ We reformulate the first expression in the right-hand side of (\ref{ident2}):
\begin{align}
\label{cauchy}
&\sum_i \left[(\thhi - \tiw1_i) - (\thib - \thbiw 1_i)\right]^T
(W_i + \phi_i w_i w_i^T)
\left[(\thhi - \tiw1_i) - (\thib - \thbiw 1_i)\right] \notag \\
&\quad =
%\sum_i (h_i + d_i) - 
%2
%\sum_i \left[(\thhi - \tiw1_i) - (\thib - \thbiw 1_i)\right]^T
%W_i
%\left[(\thhi - \tiw1_i) - (\thib - \thbiw 1_i)\right].
\sum_i \left[(\thhi - \tiw1_i) - (\thib - \thbiw 1_i)\right]^T
W_i
\left[(\thhi - \tiw1_i) - (\thib - \thbiw 1_i)\right]\notag\\
&\quad=\sum_i(h_i+d_i)-2\sum_i(\thhi - \tiw1_i)^TW_i(\thib - \thbiw 1_i),
\end{align}
noting that $w_i^T(\thhi-\tiw1_i)=w_i^T(\thib-\thbiw)1_i=0$.  
%Now write the above expression as
%\begin{align*}
%&\sum_i \left[(\thhi - \tiw1_i) - (\thib - \thbiw 1_i)\right]^T
%W_i
%\left[(\thhi - \tiw1_i) - (\thib - \thbiw 1_i)\right]\\
%&\qquad=\sum_i(h_i+d_i)-2\sum_i(\thhi - \tiw1_i)^TW_i(\thib - \thbiw 1_i),\tag{$\star$}
%\end{align*}
%where we note that constraint~(ii) may be written as 
%\begin{align*}
%h_i&=(\thhi - \tiw1_i)^TW_i(\thhi - \tiw1_i)
%=\thhi^TW_i\thhi-\left(\tiw\right)^2
%=\thhi^T\left(W_i-w_iw_i^T\right)\thhi.
%\end{align*}
Then since $h_i$ and $d_i$ do not depend on the choice of the benchmarked estimator~$\hat{\theta}$, minimization of (\ref{cauchy}) is equivalent to maximization of the expression
\begin{align}
\label{dagger}
\sum_i(\thhi - \tiw1_i)^TW_i(\thib - \thbiw 1_i).
\end{align}
Now define a pair of discrete random variables $Q_{1i}$ and $Q_{2i}$ with joint distribution $P(Q_{1i}=\hat{\theta}_{ij}-\bar{\hat{\theta}}_{iw},\;Q_{2i}=\hat{\theta}_{ij}^B-\bar{\hat{\theta}}_{iw}^B)=w_{ij}$.  Note that since $E(Q_{1i})=E(Q_{2i})=0$, the expression (\ref{dagger}) is equal to $\text{Cov}(Q_{1i},Q_{2i})$, which is maximized when $Q_{1i}$ and $Q_{2i}$ have correlation 1, i.e.,  when $\hat{\theta}_{i}-\bar{\hat{\theta}}_{iw}1_i=c_i(\hat{\theta}_{i}^B-\bar{\hat{\theta}}_{iw}^B1_i)$ for some positive scalar $c_i$.
This leads to 
$$h_i = (\thhi - \tiw1_i)^T W_i  (\thhi - \tiw1_i)
= c_i^2 (\thib - \thw 1_i)^T W_i 
(\thib - \thw 1_i) = c_i^2 d_i.
$$
Hence, $c_i = (h_i d_i^{-1})^{1/2}.$
Then
$
(\thhi - \tiw1_i) = (h_i d_i^{-1})^{1/2} (\thib - \thbiw 1_i). 
$
Combining this with $\tiw = \thbiw + g_i,$ we get $\hat{\theta}_{ij}^{BM3}$ as the minimizer of (\ref{ident2}). This proves part (a) of the theorem, and part (b) immediately follows since $\de = w_i^T \thhi.$
\end{proof}

\textbf{Remark 4.} We mention that the area-level estimates $\delta_i$ under Theorems 1 and 2 are identical under the special case when $\xi_{ij} = w_{ij},$ which  intuitively makes sense. Hence, in many situations where we are only interested in the area-level estimates for two-stage benchmarking, they will be the same under both theorems. However, the unit-level estimates $\hat{\theta}_{ij}$ will differ.

\subsection{Multiparameter Extensions}
\label{multi_parameter extentions}
We extend Theorem 1 to a multivariate setting. Consider the notation, where
$\theta = (\theta_{11},\ldots,\theta_{1n_1},\ldots,\theta_{m1},\ldots \theta_{mn_m})^T,$\;
$\hat{\theta} = (\hat{\theta}_{11},\ldots,\hat{\theta}_{1n_1},\ldots,\hat{\theta}_{m1},\ldots \hat{\theta}_{mn_m})^T,$ and
$\delta = (\delta_1,\ldots,\delta_m)^T;$ 
where $\theta_{ij}, \hat{\theta}_{ij}$, and $\delta_i$ are all vector-valued.
Define $\tbiw = \sum_j W_{ij}\tij$ and $\bar{\theta}_w = \sum_i\Gamma_i\tbiw,$ where 
$W_{ij}$ and $\Gamma_i$ are known matrices (for all $i$ and $j$). 
%
%and $p$ is given. 
Let $\hat{\theta}^B = 
(\hat{\theta}_{11}^B,\cdots,\hat{\theta}_{1n_1}^B,\cdots,\hat{\theta}_{m1}^B,\cdots,\hat{\theta}_{mn_m}^B)^T$
denote the Bayes estimators.
We also define $\thbiw = \sum_j W_{ij}\htijb$ and $\thw = \sum_i \Gamma_i \thbiw.$
%%\\
\begin{theorem}
Consider the loss function 
$$L(\theta,\that,\delta) =
\sum_i\sum_j (\thij - \tij)^T\Lambda_{ij}(\thij - \tij)
+ \sum_i (\de - \tbiw)^T\Psi_{i}(\de - \tbiw),$$ 
where $\Lambda_{ij}$ and $\Psi_i$ are positive definite matrices.
Then the minimizing solution of $E[L(\theta,\that,\delta)|\text{data}]$ subject to
(i) $\sum_jW_{ij}\thij = \de$ and \mbox{(ii) $\sum_i \Gamma_i \de = p$}
 is given by
\begin{itemize}
\item[(a)] $\hat{\theta}_{ij}^{BM3} = \htijb
+ \Lambda_{ij}^{-1}W_{ij}^Ts_i^{-1}(\Psi_i + s_i^{-1})^{-1}\Gamma_i^TR^{-1}(p - \thw)$
\item[(b)] $\hat{\delta}_i^{BM3} = \thbiw + (\Psi_i + s_i^{-1})^{-1}\Gamma_i^T R^{-1}(p - \thw),$
where $s_i = \sum_jW_{ij}\Lambda_{ij}^{-1}W_{ij}^T$ and \\
$R = \sum_i \Gamma_i(\Psi_i + s_i^{-1})^{-1}\Gamma_i^T.$
\end{itemize} 
\end{theorem}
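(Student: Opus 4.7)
The plan is to mimic the proof of Theorem 1, but now with vector-valued parameters and matrix weights, so that Lagrange multipliers and block identities replace scalar manipulations. First I would apply the same add-and-subtract trick as in display (2): conditioning on the data, expand each quadratic form as $(\theta_{ij}-\hat{\theta}_{ij}^B)+(\hat{\theta}_{ij}^B-\hat{\theta}_{ij})$ at the unit level and analogously at the area level, and use that $\hat{\theta}_{ij}^B$ (respectively $\bar{\hat{\theta}}_{iw}^B$) is the posterior mean to kill the cross terms. This collapses the problem to minimizing
\[
\sum_i\sum_j \tilde{\theta}_{ij}^T\Lambda_{ij}\tilde{\theta}_{ij}
+\sum_i\Bigl(\sum_j W_{ij}\tilde{\theta}_{ij}\Bigr)^T\Psi_i\Bigl(\sum_j W_{ij}\tilde{\theta}_{ij}\Bigr),
\]
where $\tilde{\theta}_{ij}=\hat{\theta}_{ij}-\hat{\theta}_{ij}^B$, after using constraint (i) to rewrite $\delta_i-\bar{\hat{\theta}}_{iw}^B=\sum_j W_{ij}\tilde{\theta}_{ij}$. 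Constraints (i) and (ii) then combine into the single vector constraint $\sum_i\Gamma_i\sum_j W_{ij}\tilde{\theta}_{ij}=p-\bar{\hat{\theta}}_w^B=:p^*$, paralleling (4).

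Next I would introduce a vector Lagrange multiplier $\lambda$ for that single constraint and differentiate with respect to each $\tilde{\theta}_{ij}$. The stationarity condition should read
\[
\Lambda_{ij}\tilde{\theta}_{ij}=W_{ij}^T\bigl[\Gamma_i^T\lambda-\Psi_i\bar{\tilde{\theta}}_{iw}\bigr],
\qquad \bar{\tilde{\theta}}_{iw}:=\sum_j W_{ij}\tilde{\theta}_{ij},
\]
so that $\tilde{\theta}_{ij}=\Lambda_{ij}^{-1}W_{ij}^T[\Gamma_i^T\lambda-\Psi_i\bar{\tilde{\theta}}_{iw}]$. Multiplying on the left by $W_{ij}$ and summing over $j$ yields a linear equation for $\bar{\tilde{\theta}}_{iw}$ in terms of $\lambda$, namely $(I+s_i\Psi_i)\bar{\tilde{\theta}}_{iw}=s_i\Gamma_i^T\lambda$, which I would invert using the identity $(\Psi_i+s_i^{-1})^{-1}=(I+s_i\Psi_i)^{-1}s_i$ to obtain $\bar{\tilde{\theta}}_{iw}=(\Psi_i+s_i^{-1})^{-1}\Gamma_i^T\lambda$.

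I would then substitute this into the aggregate constraint $\sum_i\Gamma_i\bar{\tilde{\theta}}_{iw}=p^*$ to solve $R\lambda=p^*$, giving $\lambda=R^{-1}(p-\bar{\hat{\theta}}_w^B)$. Plugging $\lambda$ back into the expression for $\bar{\tilde{\theta}}_{iw}$ proves (b) after adding $\bar{\hat{\theta}}_{iw}^B$. For (a), I would simplify $\Gamma_i^T\lambda-\Psi_i\bar{\tilde{\theta}}_{iw}=[I-\Psi_i(\Psi_i+s_i^{-1})^{-1}]\Gamma_i^T\lambda=s_i^{-1}(\Psi_i+s_i^{-1})^{-1}\Gamma_i^T\lambda$ via a one-line algebraic identity, then substitute into the stationarity equation to recover $\hat{\theta}_{ij}^{BM3}$. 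Finally, to confirm uniqueness (as opposed to merely stationarity), I would note that the objective is a positive-definite quadratic in $\tilde{\theta}$ (since each $\Lambda_{ij}$ and $\Psi_i$ is positive definite) on an affine subspace, so the critical point is the unique minimizer; alternatively, one can replicate the orthogonality argument in (\ref{three}).

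The main obstacle is bookkeeping: $s_i$, $\Psi_i$, and $\Gamma_i$ generally do not commute, so every step where I pass between $(I+s_i\Psi_i)^{-1}s_i$ and $(\Psi_i+s_i^{-1})^{-1}$ must be justified by careful left/right multiplication, and similarly for the simplification $I-\Psi_i(\Psi_i+s_i^{-1})^{-1}=s_i^{-1}(\Psi_i+s_i^{-1})^{-1}$. Once these two identities are handled cleanly, the rest reduces to the same Lagrange-multiplier mechanics as in Theorem 1.
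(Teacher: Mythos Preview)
Your proposal is correct. The reduction to the quadratic in $\tilde{\theta}_{ij}$ under the single constraint $\sum_i\Gamma_i\sum_j W_{ij}\tilde{\theta}_{ij}=p-\bar{\hat{\theta}}_w^B$ matches the paper exactly, and your algebra with the identities $(I+s_i\Psi_i)^{-1}s_i=(\Psi_i+s_i^{-1})^{-1}$ and $I-\Psi_i(\Psi_i+s_i^{-1})^{-1}=s_i^{-1}(\Psi_i+s_i^{-1})^{-1}$ is sound (both follow by factoring $s_i^{-1}$, respectively $(\Psi_i+s_i^{-1})$, on the appropriate side, so no commutativity is needed). Where you diverge from the paper is in the optimization step itself: the paper does \emph{not} introduce a Lagrange multiplier but instead verifies the stated formula directly by the same orthogonality trick you cite from Theorem~1. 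Writing $\tilde{\theta}_i=(\tilde{\theta}_i-\tilde{\theta}_i^{BM3})+\tilde{\theta}_i^{BM3}$, it computes $(A_i+W_i^T\Psi_iW_i)\tilde{\theta}_i^{BM3}=W_i^T\Gamma_i^TR^{-1}(p-\bar{\hat{\theta}}_w^B)$ and then uses the constraint to show the cross term vanishes. Your Lagrange-multiplier route is constructive---it \emph{derives} the minimizer rather than checking a candidate---and makes transparent why $R^{-1}(p-\bar{\hat{\theta}}_w^B)$ appears (it is the multiplier $\lambda$); the paper's verification route is shorter once the formula is in hand and sidesteps the intermediate solve for $\bar{\tilde{\theta}}_{iw}$. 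Either way, uniqueness follows from strict convexity, as you note.
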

Note that $\Lambda_{ij}$ need not be the same as $W_{ij}.$ Similarly, $\Psi_i$ need not be the same as $\Gamma_i.$ 
\begin{proof}
Similar to the proof of Theorem~1, the problem reduces to minimization of 
\begin{align}
\sum_{ij} (\thij - \thijb)^T \Lambda_{ij} (\thij - \thijb) +
\sum_i \left[ \sum_j
W_{ij} (\thij - \thijb)
\right] ^T\Psi_i
\left[ \sum_j
W_{ij} (\thij - \thijb)
\right] 
\end{align}
subject to the constraint $\sum_i \Gamma_i \sum_j W_{ij} \thij = p.$
We now define 
$\tilde{\theta}_{ij} = \hat{\theta}_{ij} - \thijb,$\;
$\thhi^T = (\hat{\theta}_{i1}^T,\ldots, \hat{\theta}_{in_i}^T),$\;
$(\thib)^T = ((\hat{\theta}_{i1}^B)^T,\ldots, (\hat{\theta}_{in_i}^B)^T),$\;
$\tilde{\theta}_i = \hat{\theta}_{i} - \thib,$ and
$A_i = \text{block diagonal}\{
\Lambda_{i1},\ldots, \Lambda_{in_i}
\}.$ In addition, let 
$W_i = 
(W_{i1},\ldots, W_{in_i})$,
so that $s_i = W_iA_i^{-1}W_i^T.$
Then we can reframe the problem as minimizing 
\begin{align}
\label{reframe}
\sum_i \tilde{\theta}_i^T (A_i + W_i^T\Psi_iW_i)\tilde{\theta}_i \quad\text{ subject to }
\sum_i \Gamma_iW_i \tilde{\theta}_i = p - \thw 
\end{align}
 with respect to $\tilde{\theta}_i.$
Write 
$\tilde{\theta}_i^{BM3} = \hat{\theta}_i^{BM3} - \thib.$ Then 
\begin{align}
\label{newloss3}
\sum_i \tilde{\theta}_i^T (A_i + W_i^T\Psi_iW_i)\tilde{\theta}_i
&= 
\sum_i (\tilde{\theta}_i  - \tilde{\theta}_i^{BM3} +  \tilde{\theta}_i^{BM3} 
)
^T (A_i + W_i^T\Psi_iW_i)
(\tilde{\theta}_i  - \tilde{\theta}_i^{BM3} +  \tilde{\theta}_i^{BM3}
)\notag \\
&=
\sum_i
(\tilde{\theta}_i  - \tilde{\theta}_i^{BM3} 
)
^T (A_i + W_i^T\Psi_iW_i)
(\tilde{\theta}_i  - \tilde{\theta}_i^{BM3} )\\
& + 
\sum_i
(  \tilde{\theta}_i^{BM3} 
)
^T (A_i + W_i^T\Psi_iW_i)
(\tilde{\theta}_i^{BM3}
)\notag\\
&+2 \sum_i
( \tilde{\theta}_i^{BM3} 
)
^T (A_i + W_i^T\Psi_iW_i)
(\tilde{\theta}_i  - \tilde{\theta}_i^{BM3} 
).\notag 
\end{align}
Since $s_i = W_i A_i^{-1} W_i^T,$ we find
\begin{align}
\label{ab}
(A_i + W_i^T\Psi_iW_i)
\tilde{\theta}_i^{BM3}\;
&=
(A_i + W_i^T\Psi_iW_i) A_i^{-1} W_i^T s_i^{-1} (\Psi_i + s_i^{-1})^{-1}
\Gamma_i^T R^{-1} (p - \thw)\notag \\
&=
W_i^T (\Psi_i + s_i^{-1})(\Psi_i + s_i^{-1})^{-1}
\Gamma_i^T R^{-1} (p - \thw)\notag\\
&= 
W_i^T \Gamma_i^T R^{-1} (p - \thw).
\end{align}
This leads to 
\begin{align}
\label{end}
\sum_i (
 \tilde{\theta}_i^{BM3} 
)
^T (A_i + W_i^T\Psi_iW_i)
(\tilde{\theta}_i  - \tilde{\theta}_i^{BM3} 
)
&=  (p - \thw)^{T}R^{-1}\sum_i \Gamma_i W_i (\tilde{\theta}_i -
\tilde{\theta}_i ^{BM3})\notag \\
&= 
(p - \thw)^{T}R^{-1} \left[
(p - \thw) - (p - \thw)
\right] = 0
\end{align}
due to the constraint on $\tilde{\theta}_i$ and the fact that
\begin{align*}
\sum_i \Gamma_i W_i \tilde{\theta}_i ^{BM3}
= \sum_i 
\left \{
\Gamma_i  (\Psi_i + A_i^{-1})^{-1}  \Gamma_i^T
\right\}
R^{-1} (p  - \thw) = p - \thw
\end{align*}
from the definition of $R.$
In view of (\ref{newloss3}) and (\ref{end}), the unique solution of 
(\ref{reframe}) is given by $\tilde{\theta}_i = \hat{\theta}_i^{BM3}.$ This proves part (a) of the theorem, and part (b) immediately follows.
\end{proof}

\section{Application to an Asian Subpopulation from the NHIS}
\label{app:nhis}
This section considers small area/domain estimation of the proportion of people without health insurance for several domains of an Asian subpopulation from the National Health Interview Survey (NHIS) in 2000. Our goal is to benchmark the aggregated probabilities that a person does not have health insurance to the corresponding weighted proportions in the sample for each domain. We also benchmark the weighted proportions in the sample for each domain to match the overall estimated population proportions. 

\subsection{Specifics on the Asian Subpopulation}

The 96 domains were constructed on the basis of age, sex, race, and the size of the  metropolitan statistical area in which the individual resides.
The NHIS study provides the individual-level binary response data, whether or not a person has health insurance, as well as the individual-level covariates,  where the covariates used are 
family size, education level, and total family income. 
The Asian group is made up of the following four race groups: Chinese, Filipino, Asian Indian, and others (such as Korean, Vietnamese, Japanese, Hawaiian, Samoan, Guamanian, etc.). Individuals from these subpopulations are assigned to specific domains depending on their age, gender, and region. There are three age groups (0--17, 18--64, 65+). Furthermore, there are two genders, four races, and four regions that depend on the size of the metropolitan statistical area ($\leq$~499,999; 500,000--999,999; 1,000,000--2,499,999; $\geq$~2,500,000). Hence, there are $4\times2\times4\times3 =96$ total domains. 
In \cite{ghosh_2009} a stepwise regression procedure is used for this data set to reach a final model with an intercept term and three covariates: family size, education level, and total family income. 
%More information on this dataset is described in Ghosh et al.\ (2009).
%
% 

\subsection{A Hierarchical Bayesian Model}
When targeting specific subpopulations cross-classified by demographic 
characteristics, direct estimates are usually accompanied by large standard 
errors and coefficients of variation. Hence, a procedure such as the one 
proposed in Section 2 is appropriate. Let
%$\text{logit}(\theta_{ij}) = p_{ij}$
%$p_{ik} = \dfrac{e^{\theta_{ik}}}{1 + e^{\theta_{ik}}}$ 
 $N = \sum_i n_i,$ and $Z = \text{block diagonal}(1_{n_1}, \ldots, 1_{n_m})$. 
Define $y_{ij} $ to be a Bernoulli random variable with value 1 if a person does not have health insurance and 0 otherwise. We denote $p_{ij}=P(y_{ij}=1)$ and
$\theta_{ij}=\mbox{logit}(p_{ij})$. The model considered is given by
\begin{align}
\label{model}
y_{ij}|\theta_{ij}&\stackrel{ind}{\sim} \text{Bin}(1, \theta_{ij}),\\
\theta_{ij} |u_i&\stackrel{ind}{\sim} N(x_{ij}^T\beta+u_i,\sigma_e^2),\notag\\
%\text{logit}(\theta) \mid \beta, u, \sigma_e^2 &\sim N(X\beta + Zu,\;  \sigma_e^2 I_N) \\
u_i|\sigma_u^2&\stackrel{iid}{\sim}N(0, \sigma_u^2),\notag\\
\beta_{p \times 1}& \sim \text{Uniform}(\mathbb{R}^p),\notag\\
\sigma_e^2 &\sim \text{InverseGamma}(a/2, b/2),\notag\\
\sigma_u^2  &\sim \text{InverseGamma}(c/2, d/2),\notag
\end{align}
%Further, it is assumed that $\beta$, $\sigma_e^2$ and $\sigma_u^2$ are
%mutually independent with\\
%$\beta_{p \times 1} \sim \text{Uniform}(\mathbb{R}^p)$;
%$\sigma_e^2 \sim \text{InverseGamma}(a/2, b/2)$;
%$\sigma_u^2  \sim \text{InverseGamma}(c/2, d/2)$.\\
where $\beta$, $\sigma_e^2$, and $\sigma_u^2$ are
mutually independent. Define the notation as:
$X^T=(x_{11},\ldots,x_{1n_1},\ldots,x_{m1},\ldots,x_{mn_m})$,
$\theta^T=(\theta_{11},\ldots,\theta_{1n_1},\ldots,\theta_{m1},\ldots,
\theta_{mn_m})^T$, $u=(u_1,\ldots,u_m)^T$. We also define $$D = \text{Diag}\left(
\left(\dfrac{n_1}{\sigma_e^2} + \dfrac{1}{\sigma_u^2}\right)^{-1}, \ldots,
\left(\dfrac{n_m}{\sigma_e^2} + \dfrac{1}{\sigma_u^2}\right)^{-1}
\right).$$
We derive the full conditionals from \cite{ghosh_1998}, 
\begin{align*}
\beta \mid \theta, u, \sigma_e^2, \sigma_u^2, y &\sim
 N\left(\;(X^TX)^{-1} X^T(\theta - Zu), \; \sigma_e^2(X^TX)^{-1} \right),\\
u \mid \theta, \beta, \sigma_e^2, \sigma_u^2, y &\sim 
N\left(\;
\frac{1}{\sigma_e^2}DZ^T(\theta-X\beta),\;D
\; \right),\\
%\sigma_e^2 | \theta, u, \beta, \sigma_u^2, y & \sim
%\text{InverseGamma}\left(\frac{1}{2}\left[
%c + \sum_{i,k} (\theta_{ik} - x_{ik}^T\beta - u_i
%\right],
%\frac{1}{2}(d + \sum_in_i)
%\right)\\
\sigma_e^2 | \theta, u, \beta, \sigma_u^2, y & \sim
\text{InverseGamma}\left(\frac{1}{2}\left[
a + (\theta - X\beta - Zu)^T(\theta - X\beta - Zu)
\right],\;
\frac{1}{2}(b + N)
\right),\\
\sigma_u^2 | \theta, u, \beta, \sigma_e^2, y & \sim
\text{InverseGamma}\left(\frac{1}{2}\left(
c + u^Tu
\right),\;
\frac{1}{2}(d + m)
\right),\\
\pi(\theta_{ij} \mid u, \beta,\sigma_u^2, \sigma_e^2, y  )& \propto
\exp\left[
y_{ij} \theta_{ij} - \log(1 + e^{\theta_{ij}})
-\frac{(\theta_{ij} - x_{ij}^T\beta- u_i)^2}{2\sigma_e^2}
\right].
\end{align*}

We implement a Gibbs sampler with the $\theta_{ij}$ drawn by an embedded rejection sampling scheme, which has 200,000 Gibbs iterations, a burn-in of 2000, and a thin value of 200. Various diagnostics were checked such as trace plots, autocorrelation plots, and others until we were satisfied that the chain was not failing to converge (using two different starting values for the parameters being monitored). 

\subsection{Analyzing the Data}
In our analysis, we first find the Bayes estimates and associated standard errors (of the small domains) using Markov chain Monte Carlo (MCMC). Let $\hat{\theta}_{ij}^{(m)}$ denote the sampled value of $\theta_{ij}$ from the MCMC output generated from the $m$th draw, where there are $M$ total draws. The Monte Carlo estimate of $E(\tij|\textbf{y}) = M^{-1}\sum_{m=1}^M \hat{\theta}_{ij}^{(m)}.$ The Monte Carlo estimate of 
$$\text{Cov}(\tij, \theta_{i'j'}|\textbf{y}) = 
M^{-1}\sum_{m=1}^M(\hat{\theta}_{ij}^{(m)}\hat{\theta}_{i'j'}^{(m)})
- (M^{-1}\sum_{m=1}^M\hat{\theta}_{ij}^{(m)})(M^{-1}\sum_{m=1}^M\hat{\theta}_{i'j'}^{(m)}).$$ 
Then 
\begin{equation*}
E(\bar{\theta}_{iw}|\textbf{y}) = \sum_{j=1}^{n_i}w_{ij}E(\tij|\textbf{y})\; \text{ and }\; \text{Var}(\bar{\theta}_{iw}|\textbf{y}) 
= \sum_{j=1}^{n_i}w_{ij}w_{ij'}\text{Cov}(\tij,\theta_{ij'}|\textbf{y}).
\end{equation*}
 We then find the two-stage benchmarked Bayes estimates using Theorem 1 with the weights to be discussed below. Furthermore, to measure the variability associated with the two-stage  benchmarked Bayes estimates at the area levels, we use 
the posterior mean squared error (PMSE) under the assumed Bayesian model
as given in Remark 3. This is given by
% \begin{align}
% \text{PMSE}\lt(\hat{\theta}_{ij}^{\text{BENCH}}\rt) = E\lt[
% (\hat{\theta}_{ij}^{\text{BENCH}}-\theta_{ij})^2|
% \boldsymbol{y} \rt] = V(\theta_{ij}|\boldsymbol{y}) + (\hat{\theta}_{ij}^{\text{BENCH}}-\hat{\theta}_{ij}^B)^2.
%\end{align}
%Then the \text{PMSE} of $\delta_i$ is given by
\begin{align}
\text{PMSE}(\hat{\delta}_i^{BM1}) = 
 (\hat{\delta}_i^{B} - \thbariw)^2 + \text{Var}(\bar{\theta}_{iw}|\textbf{y}).
\end{align}
%
%Then the PMSE of $\de$ is given by
%\begin{align}
%PMSE(\de) = 
% (\de - \thbariw)^2 + \text{Var}(\bar{\theta}_{iw}|\boldsymbol{y}).
%\end{align}
We consider the following constraint weights, where $w_{ij}^* =$ final person weight in the NHIS dataset:
\begin{equation*}
w_{ij} = \frac{w_{ij}^{*}}{\sum_j w_{ij}^{*}} \; \text{ and } \eta_{i} = \frac{\sum_jw_{ij}^*}{\sum_i\sum_jw_{ij}^*}.
\end{equation*}
 This leads to $p = \frac{\sum_i \sum_jw_{ij}^{*}y_{ij}}{\sum_i \sum_j w_{ij}^{*}} = 
\sum_i \sum_j (w_{ij}  y_{ij})\eta_i.$ 
In choosing the loss function weights, we consider three different sets of weights that will define three different estimators: the inverse-variance estimator~$\di$, the raked estimator~$\dr,$ and the domain-weighted inverse-variance estimator~$\dn.$
We also consider a na\"{i}ve estimator~$\dc$ for comparison purposes.
Each estimator and its corresponding loss function weights are described below, where the constraint weights for benchmarking are to be held fixed.
\begin{enumerate}
\item
We first consider the constant choice $\xi_{ij}=\phi_i=1$ for all $i$ and $j$, which leads to what we call the constant-weight estimator~$\dc$.  It will be seen later that this choice yields undesirable results.
\item The
second
set of weights corresponds to what we call the inverse-variance estimator $\hat{\delta}_i^I.$ To achieve this estimator we take $\xi_{ij} = \V^{-1}(\theta_{ij}|\textbf{y})$ and $ \phi_i = \V^{-1}(\sum_j w_{ij} \theta_{ij}|\textbf{y}).$
%\dfrac{ V^{-1}(\sum_j w_{ij} \theta_{ij}|\text{data})}{ V^{-1}(\sum_i \sum_j \eta_i w_{ij}\theta_{ij}|\text{data})}.$ 
We choose this set of weights via the suggestion of \cite{isaki_2004}. Intuitively, if the Bayes estimator is thought to be a good estimator, then the posterior variance of $\tij$ will be small, meaning that $\xi_{ij}$ will be large. 
This will assign greater weight to these small areas, meaning that the loss penalizes errors in these domains more heavily.
Similarly, the same behavior will hold true for our choice of $\phi_i.$ The opposite behavior will happen in the reverse scenario. 
\item The
third
set of weights we take is given in Remark 2 of Theorem 1, which results in the raked estimator $\hat{\delta}_i^R.$ 
\item Finally, note that the weights suggested by \cite{isaki_2004} do not take into account the dramatically different sample sizes of the domains.  To remedy this, we define the domain-weighted inverse-variance estimator~$\dn$ to have the following weights: $\xi_{ij} =  V^{-1}(\theta_{ij}| \bm{y})$ and $\phi_i=  n_i
V^{-1}(\sum_jw_{ij}\theta_{ij}| \bm{y}).$ 
\end{enumerate}
An overall general simplification of the area-level estimates can be found by plugging into Theorem~1. Moreover, the area-level estimates under Theorem~2 are identical to those under Theorem~1.  

\subsection{Comparison of Benchmarked Bayes Estimators}
In this section, we illustrate that the four estimators considered in Section 3.3 never dominate each other in terms of posterior mean squared error, which is consistent with our theoretical results. 
(See the appendix for all figures).\\

The left-hand plot in Figure~1 illustrates the undesirable behavior of the constant-weight estimator~$\dc.$  The constant-weight estimator~$\dc$ tends to make substantially smaller adjustments in domains with smaller sample sizes, which is difficult to logically defend.  This effect is reduced under the inverse-variance estimator~$\di$, but it is not eliminated entirely.  The domain-weighted inverse-variance estimator~$\dn$ takes this idea a step further by the introduction of an extra factor that more severely penalizes large adjustments in areas with large sample sizes.  The middle plot of Figure~1 demonstrates that~$\dn$ indeed has the effect intended, as the adjustments tend to be smaller in domains with larger sample sizes.  The right-hand plot compares the domain-weighted inverse-variance estimator~$\dn$ to the raked estimator~$\dr.$ The magnitude of the adjustment made by the raked estimator~$\dr$ depends only on the value of the Bayes estimate, not on the domain sample size, which differs markedly from the behavior of the domain-weighted inverse-variance estimator~$\dn$.\\

An important feature of the plots in Figure~1 is that none of the four estimators can be said to dominate any of the others in terms of posterior mean squared error, although we were able to identify the behavior of the constant-weight estimator~$\dc$ as undesirable for other reasons.  (See Remark~3 for an explanation of these estimators' inability to dominate each other.)  Thus, it is not immediately clear which of $\di$, $\dn$, or $\dr$ would be the most desirable set of estimates to report in this situation.  However, what these results demonstrate is that the constrained Bayesian approach to the benchmarking problem is quite flexible, as different choices of the loss function weights can lead to estimators that exhibit a wide variety of behavior.  The Bayesian benchmarking framework provides a sound theoretical justification for these various estimators and unifies them as special cases of the much more general result in Theorem~1.\\

Figure 2 shows the three given estimators: the raked estimator $\dr$, the inverse-variance estimator  $\di$, and the domain-weighted inverse-variance estimator $\dn.$ Each estimator's percent increase in posterior root mean square error (\%~PRMSE) is plotted versus the domain sample size ($n_i$). We observe that the raked estimator has the least amount of spread for the smaller domains, while the domain-weighted estimator has less spread for larger domains. The domain-weighted estimator has the most consistent overall spread as the domain sample size increases, which makes sense since this estimator weights the loss function partially accordingly to the domain sample size $n_i.$

%how many people are in domain.

\subsection{Simulation Study}
We perform a simulation study to address how the model performs under simulated data compared to the model proposed in (\ref{model}). For simulation, we use the same covariates from the NHIS study and we use the posterior estimates of $\beta,$ $\theta$, $\sigma_u^2$ and~$\sigma_e^2$ from the run performed in the previous section. We draw a simulated value~$y_{\text{sim}}$ according to the following model:
\begin{align}
\label{sim}
y_{{\text{sim}}_{N \times 1}} \mid \theta_{N \times 1}  &\stackrel{ind}{\sim} \text{Bin}(1, \theta), \\
\text{logit}(\theta_{N \times 1} ) &= X\beta + Zu + e,\notag\\
%\text{logit}(\theta) \mid \beta, u, \sigma_e^2 &\sim N(X\beta + Zu,\;  \sigma_e^2 I_N) \\
u_{m \times 1} \mid  \sigma_u^2 & \stackrel{iid}{\sim} N(0,  \sigma_u^2 I_m),\notag\\
e_{N \times 1} \mid  \sigma_e^2 & \stackrel{iid}{\sim} N(0,  \sigma_e^2 I_N)\notag
\end{align}
with known parameter values coming from the posterior estimates of $\beta,$ $\theta$, $\sigma_u^2$ and $\sigma_e^2$  already mentioned. We input the simulated $y_{\text{sim}}$ into our Gibbs sampler based on 
the model in (\ref{model}). \\

%That is, in the lefthand plot of Figure 3, we compare, $\dr - \db$ under the NHIS (red circles) and simulated (black plusses) data versus the domain sample size. We find that behavior for the simulated and NHIS data is relatively the same, as we would expect. We also compare $\di - \db$ and $\dn - \db$ under the NHIS and simulated data versus the domain sample size. Again, we find that the behavior is very similar for both sources of data.\\

In Figure 3, we compare the difference of the proposed estimators and the Bayes estimator versus the domain sample size. 
On the left, we plot the difference $\dr - \db$ versus $n_i$ for the simulated data and NHIS data. We also plot the difference $\di - \db$ versus $n_i$ for the simulated and NHIS data which corresponds to the middle plot. We plot the same for $\dn-\db$ on the right.
All three plots illustrate similar behavior regardless of the data source. \\

In addition, the left-hand plot of Figure~\ref{plot_est} plots the \% increase in PRMSE of the raked estimator under the NHIS data  and the simulated data versus the domain sample size. This plot illustrates that under both data sources there is more spread as the inverse-variance estimator increases. 
The middle figure plots the \% increase in PRMSE of the inverse-variance estimator under the NHIS and the simulated data versus the domain sample size, while the right-hand plot does the same for the domain-weighted inverse-variance estimator.
 All three plots illustrate that regardless of the source of the data, 
 there is more spread as the respective estimator increases. \\

Furthermore, we compared kernel density estimates of the raked estimator $\dr,$ the 
inverse-variance estimator $\di,$ and the domain-weighted inverse-variance estimator $\dn$ under the simulated model and the proposed model. That is, we looked at the difference between the respective estimators for the simulated and NHIS data, and estimated the kernel density of this quantity. Figure~\ref{plot_density} shows no irregular behavior except at the tails, which is not unexpected.\\

As an additional remark for general applications, often in small area problems, some local areas are not sampled. This is true,
for example, when local areas are counties. However, one can still obtain
hierarchical Bayes estimators for these areas, somewhat akin to the
regression estimators, by borrowing strength from other areas. Once these
Bayes estimators are obtained, the benchmarking procedures proposed in this
paper will apply, without posing any additional problem.

\section{Discussion}
A novel two-stage benchmarking procedure has been developed involving a general loss and no specific distributional assumptions. We have shown that it is possible to control both the mean and variability of the estimates. Doing so at the area level should be the next step for future research, proving to be a much more complex problem. 
We have illustrated the methodology using NHIS data as well as under a simulated scenario and shown that the raked, inverse-variance, and domain-weighted inverse-variance estimators do not dominate any other estimator in terms of posterior mean squared error, which agrees with our theoretical derivations. In terms of understanding complex surveys such as this one for the NHIS, there are many factors that need further exploration: how to best choose the weights, how to speed up algorithms for complex surveys to analyze real data and for simulation studies, and finally extensions of benchmarking to multi-stage procedures. \\

\textbf{Acknowledgements}:
This research was partially supported by the United States Census Bureau Dissertation Fellowship Program and NSF Grants SES 1026165  and SES 1130706. The views expressed reflect those of the authors and not of the National Health Interview Survey, the United States Census Bureau, or NSF. We would like to express our thanks to the Associate Editor and referees for their helpful suggestions.

%\begin{thebibliography}{}

\newpage
\bibliography{chomp}
%\bibliographystyle{unsrtnat}
%
% and use \bibitem to create references. Consult the Instructions
% for authors for reference list style.
%
%\bibitem{RefJ}
% Format for Journal Reference
%Author, Article title, Journal, Volume, page numbers (year)
% Format for books
%\bibitem{RefB}
%Author, Book title, page numbers. Publisher, place (year)
% etc
%\end{thebibliography}

\newpage
\section{Appendix}

%fig1
\begin{figure}[htdp]
%\vspace*{-35em}
\begin{center}
\includegraphics[scale=0.6]{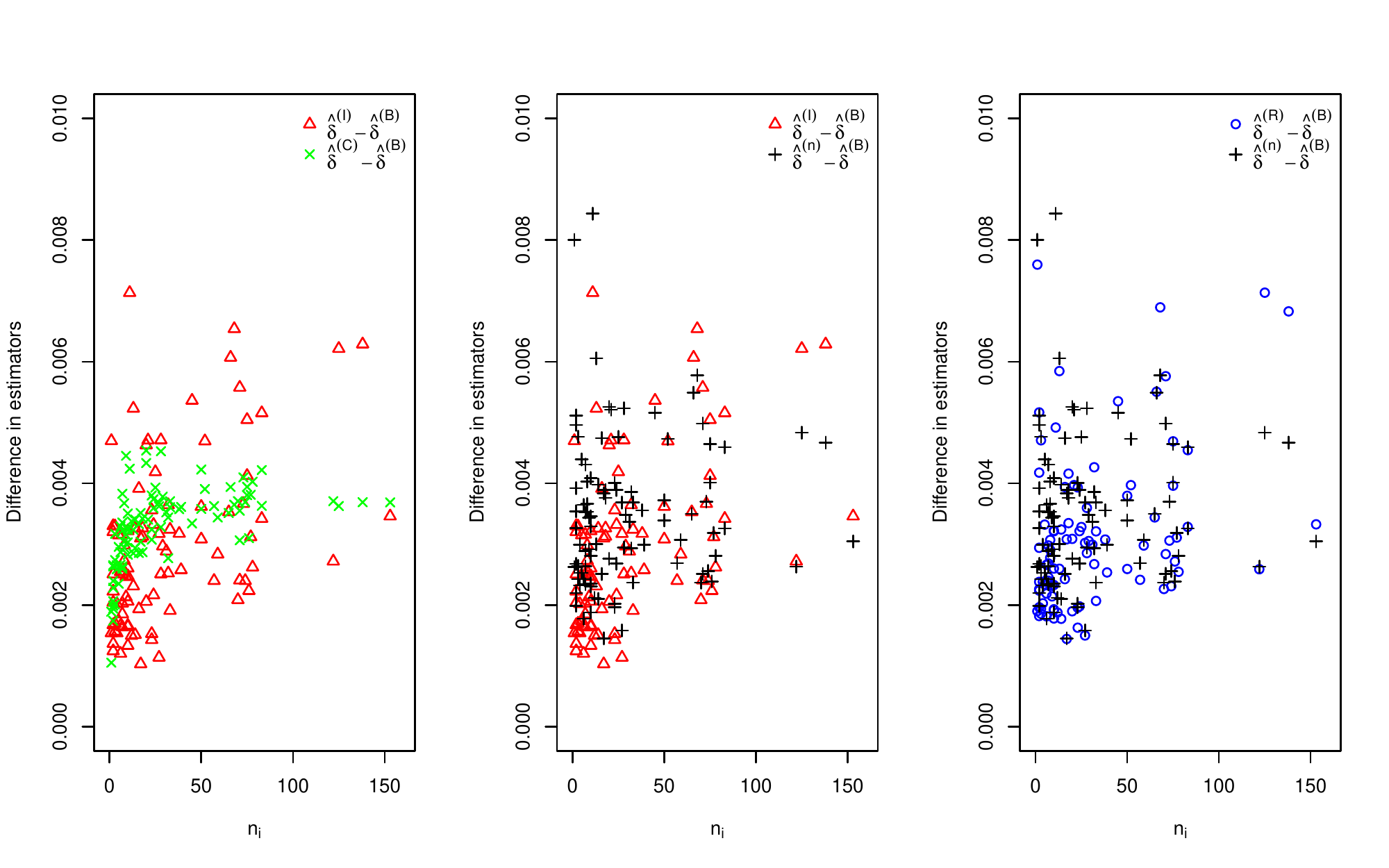}
\end{center}
\label{default}
\caption{The first plot shows $\di - \db$ and $\dc - \db$ versus domain sample size $n_i,$ illustrating that the estimator $\dc$ is not desirable. The middle plot shows $\di - \db$ and $\dn- \db$ versus domain sample size $n_i,$ while the last plot shows $\dr - \db$ and $\dn - \db$ versus domain sample size $n_i.$ 
}
\end{figure}%

%fig2

\vspace*{-10em}
\begin{figure}[htdp]
\begin{center}
\includegraphics[scale=0.45]{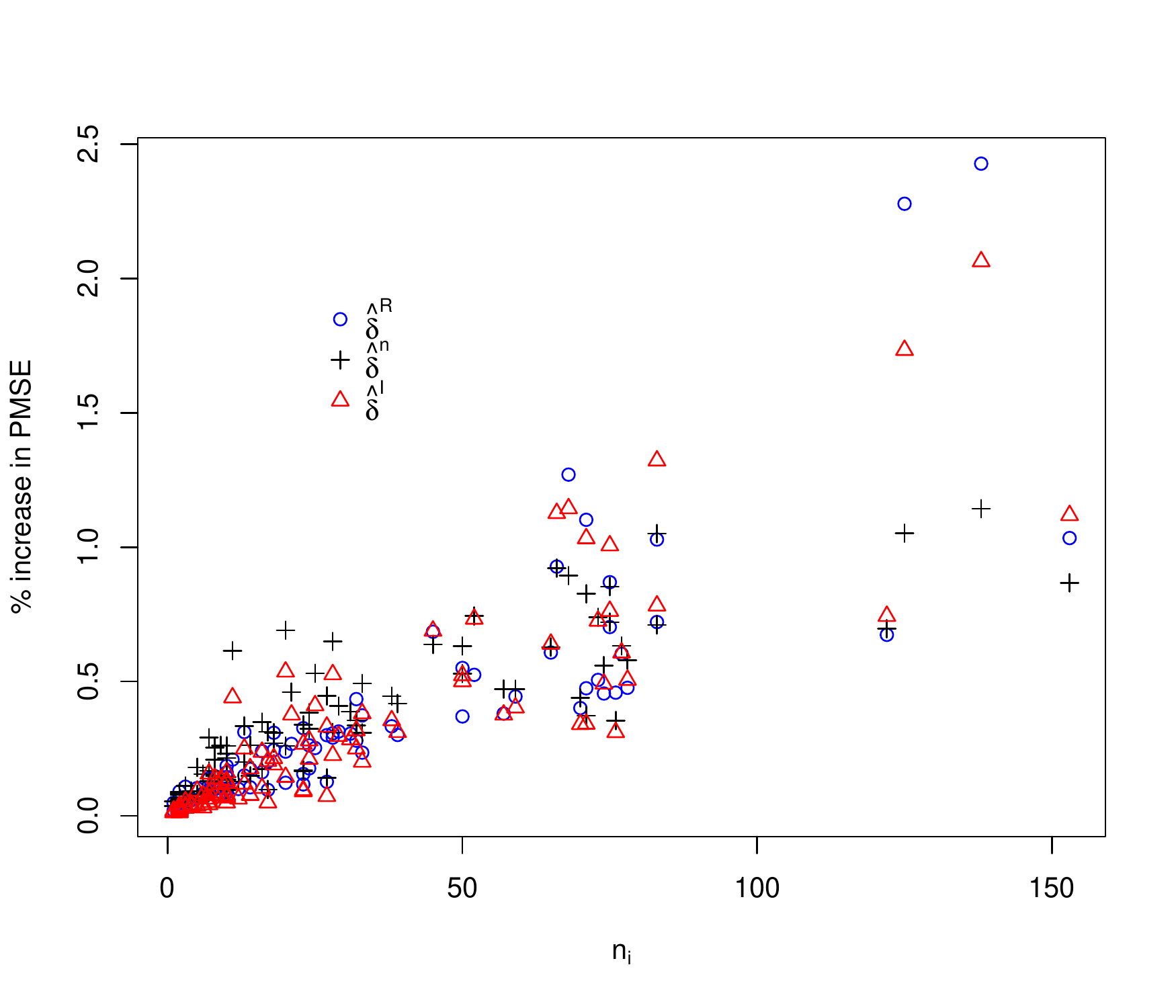}

\begin{tabular}{|c|c|}

\end{tabular}
\end{center}
\label{default}
\caption{This plot shows the percent increase in PMSE of the raked estimator  $\dr$, the inverse-variance estimator  $\di$, and the domain-weighted inverse-variance estimator $\dn$ versus the domain sample size $n_i.$ }
%Each estimator's percent increase in posterior root mean square error (\%~PRMSE) is plotted versus the domain sample size ($n_i$). We observe that the raked estimator has the least amount of spread for the smaller domains, while the domain weighted estimator has less spread for larger domains. The domain weighted estimator has the most consistent overall spread as the domain sample size increase which makes sense since this estimator weights the loss function partially accordingly to how many people are in domain. }
\end{figure}

%%simulation study figures
%%fig3
%\begin{figure}[h]
%%\vspace*{-3em}
%\center
%\includegraphics[scale=.6]{pics/n_plots2_nhis_sim.pdf}
%\caption{Comparison of inverse-variance (blue) versus raked estimator for the simulated data (indicated by plus symbols) and the NHIS data (indicated by circles). The left figure illustrates that neither estimator is globally optimal as a function of the domain sample size. The figure on the right illustrates the same behavior.}
%\label{plot_sim}
%\end{figure}
%\newpage
%\begin{figure}[h]
%\vspace*{-4em}
%\center
%\includegraphics[scale=.6]{pics/sim_estimators.pdf}
%\caption{The figure on the left plots the PRMSE of the inverse-variance estimator versus the inverse-variance  estimator  under the NHIS data (circles) versus the simulated data (plus symbols). This plot illustrates that under both data sources there is more spread as the inverse-variance estimator increases. 
%The figure on the right plots the same but for the raked estimator. This plot shows that the PRMSE is a relatively linear function of the raked estimator regardless of the data used.}
%\label{plot_est}
%\end{figure}

\begin{figure}[h]
%\vspace*{-3em}
\center
\includegraphics[scale=.55]{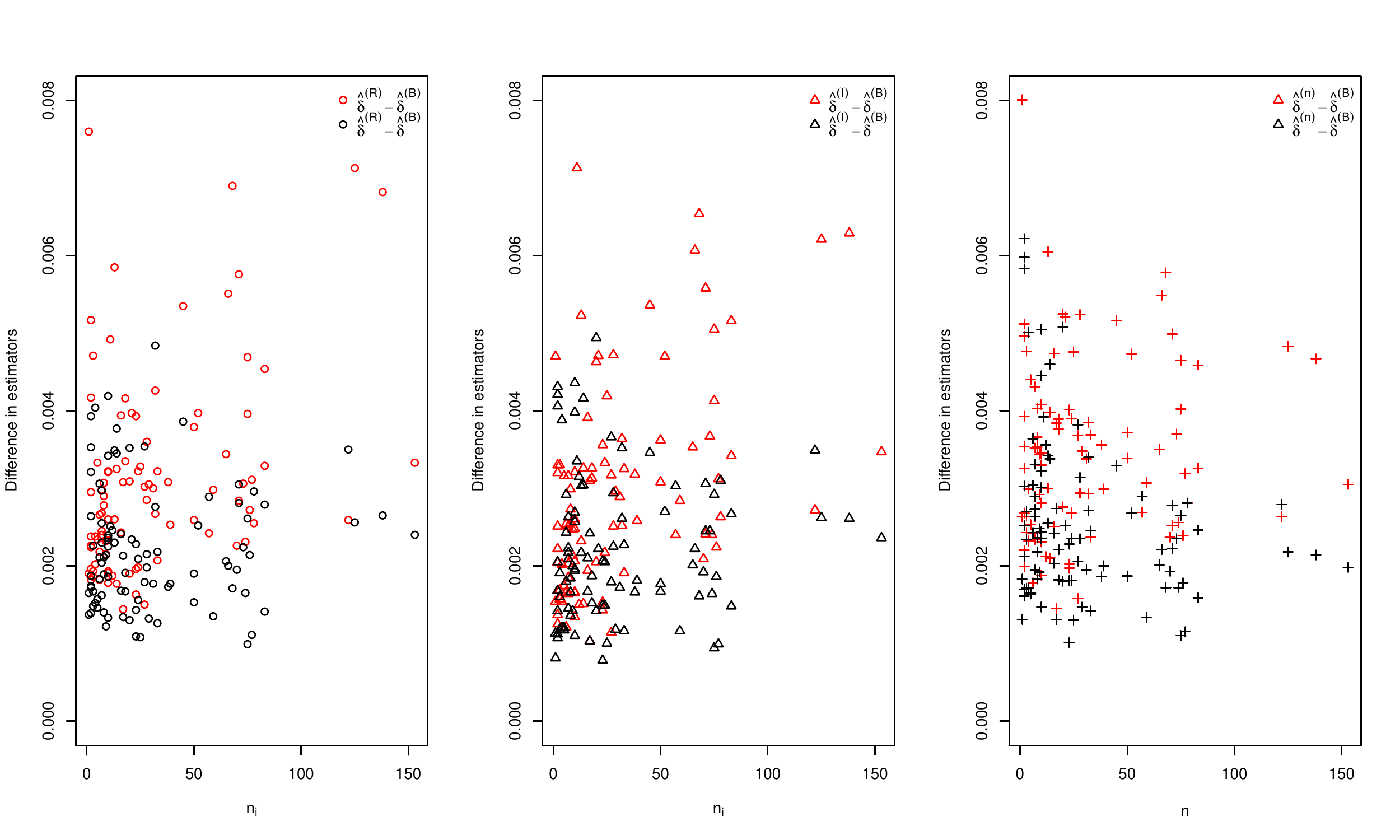}
\caption{On the left, we plot the difference $\dr - \db$ versus $n_i$ for the simulated and NHIS data. We also plot the difference $\di - \db$ versus $n_i$ for the simulated and NHIS data  which corresponds to the middle plot. We plot the same for $\dn-\db$ on the right.
All three plots illustrate similar behavior for the simulated data as the NHIS data.}
\label{plot_sim}
\end{figure}
\newpage

%%update this one
\begin{figure}[h]
%\vspace*{-4em}
\center
\includegraphics[scale=.5]{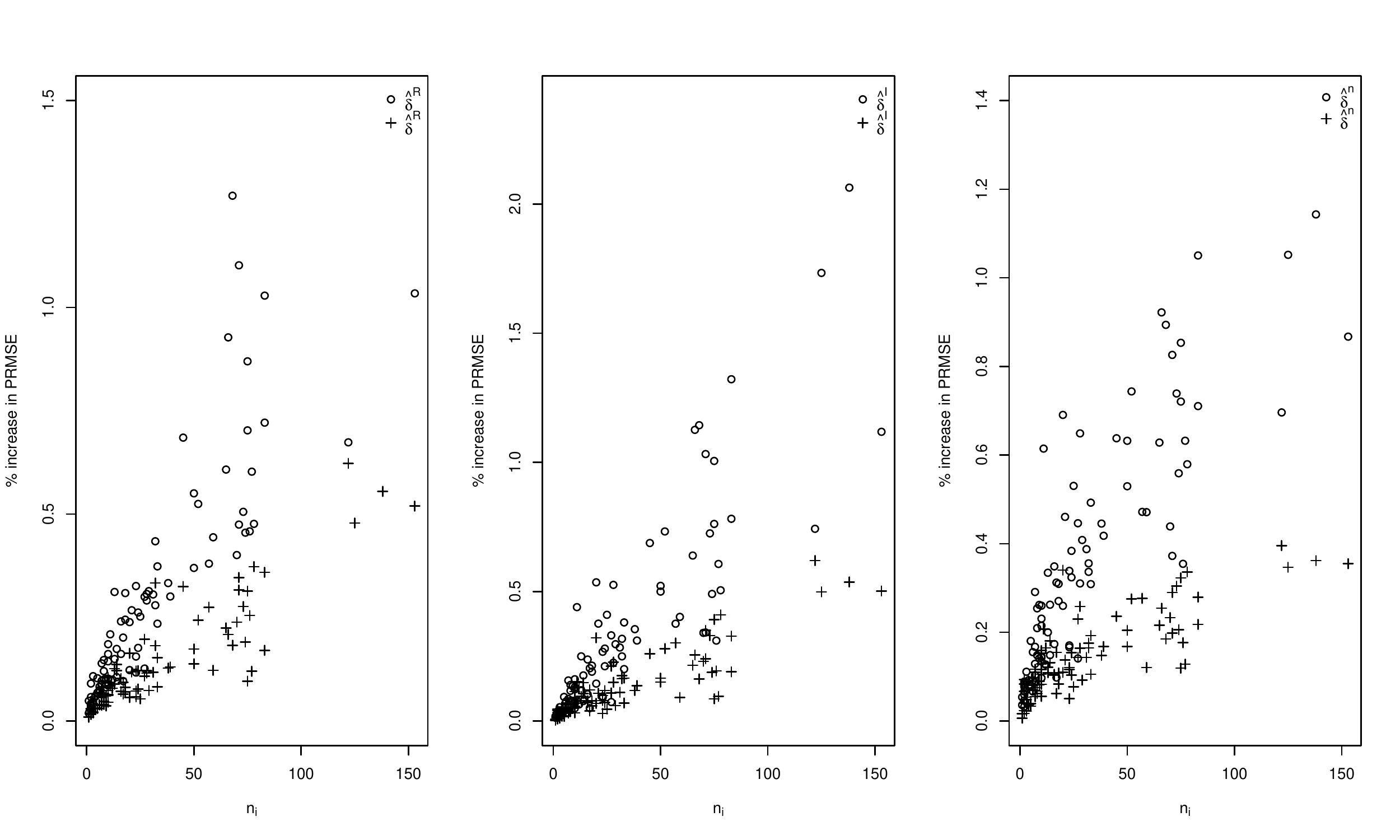}
\caption{The figure on the left plots the \% increase in PRMSE of the raked estimator under the NHIS and the simulated data versus the domain sample size. This plot illustrates that under both data sources there is more spread as the inverse-variance estimator increases. 
The middle figure plots the \% increase in PRMSE of the inverse-variance estimator under the NHIS data (circles) and the simulated data (plusses) versus the domain sample size, while the right-hand plot does the same for the domain-weighted inverse-variance estimator.}
\label{plot_est}
\end{figure}

%%fig 5
\begin{figure}[h]
%\vspace*{-5em}
\begin{center}
\includegraphics[scale=.5]{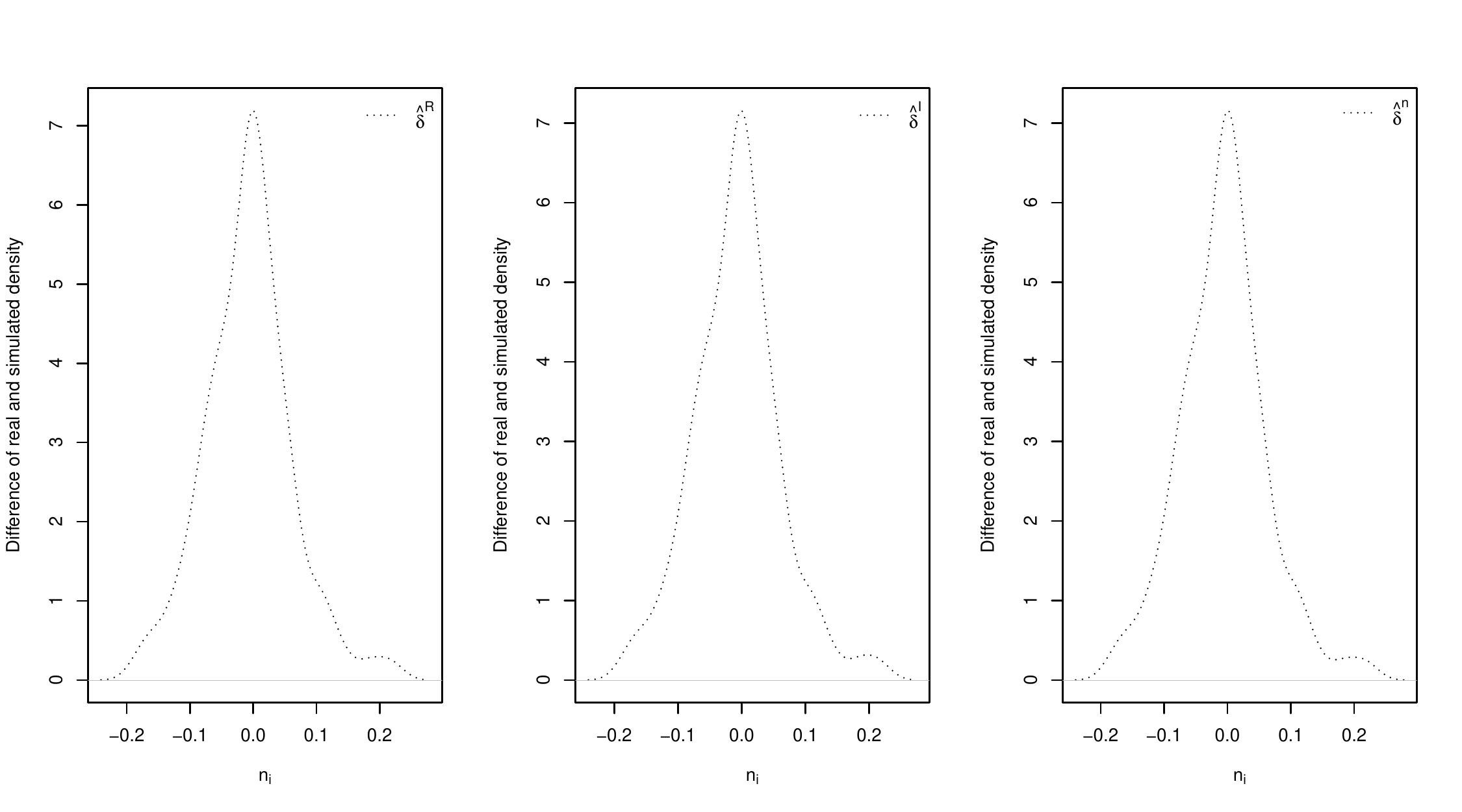}
\caption{The figure on the left plots a kernel density estimate of the difference of the inverse estimator calculated under (i) the NHIS data and (ii) the simulated data. The figure on the right does this for the raked estimator. Both plots illustrate that the posterior mean of the difference of the densities is centered near~0, with the function being unimodal.}
\label{plot_density}
\end{center}
\end{figure}

\end{document}